\documentclass[11pt]{amsart}
\usepackage{fullpage}
\usepackage{mydefs}
\usepackage{amssymb}
\usepackage{graphicx}

\usepackage{amsmath}

\usepackage{graphicx}
\usepackage{amssymb}
\usepackage{multicol}
\usepackage{epstopdf}
\usepackage{mydefs}
\usepackage{algorithm}
\usepackage{algorithmic}

\usepackage{url}

\DeclareGraphicsRule{.tif}{png}{.png}{`convert #1 `dirname #1`/`basename #1 .tif`.png}
\newcommand{\alg}[1]{\textbf{#1}}   

\newcommand{\calA}{\mathcal{A}}      

\def\mainC{{\rho}}
\def\avgA{{\bar{A}}}
\def\effC{{5 \rho}}

\title{Wireless Network Stability in the SINR Model}

\author[E. I.\'Asgeirsson]{Eyj\'olfur I. \'Asgeirsson}
\address[E. I.\'Asgeirsson]{School of Science and Engineering\\
 Reykjavik University\\
 101 Reykjavik, Iceland\\}
\email{eyjo@ru.is}

\author[M. M. Halld\'orsson]{Magn\'us M. Halld\'orsson}
\address[M. M. Halld\'orsson]{School of Computer Science\\
 Reykjavik University\\
 101 Reykjavik, Iceland\\}
\email{mmh@ru.is}

\author[P. Mitra]{Pradipta Mitra}
\address[P. Mitra]{School of Computer Science\\
Reykjavik University\\
Reykjavik 101, Iceland}
\email{ppmitra@gmail.com}

\begin{document}

\begin{abstract}
We study link scheduling in wireless networks under stochastic arrival processes
of packets, and give an algorithm that achieves stability in the
physical (SINR) interference model. The efficiency of such an algorithm is the
fraction of the maximum feasible traffic that the algorithm can handle
without queues growing indefinitely.  
Our algorithm achieves two important goals:
(i) efficiency is independent of the size of the network, and (ii) the
algorithm is fully distributed, i.e., individual nodes need no
information about the overall network topology, not even local
information.
%
\end{abstract}

\maketitle 
\section{Introduction}
We study the problem of scheduling packets over links in a wireless network,
each link being a sender-receiver pair of wireless nodes.
%
Since wireless signals propagate in all directions, simultaneous
transmissions interfere with each other. This interference limits the
number of transmissions that can succeed simultaneously. A wireless
packet scheduling algorithm thus has to schedule packets efficiently,
with respect to the limits imposed by interference.

In this setting, consider the following two related problems. The first: Given a set of links, how quickly (i.e., using how few slots) can all of the links be scheduled, taking interference into account? The second: Given is a set of links, and packets arrive at the senders of each link according to some stochastic process, where they remain queued until successfully transmitted to the receiver. Can one ensure that queue sizes at the senders remain bounded, in expectation? The first question is an ``off-line'' algorithmic problem, whereas the second comes from a queueing theoretic perspective where the input is probabilistic.
In spite of their obvious commonalities, they are generally studied using quite disparate techniques. Our goal in this paper is to bridge a gap between these two related areas; specifically, to use recently developed algorithmic techniques to achieve results for the stochastic setting.

To do this, a crucial first step is to choose the right interference model --
one that is faithful to physical reality yet is simple enough to be rigorously analyzable. In this paper, we adopt the SINR (Signal to Interference and Noise ratio) or \emph{physical} model of interference. Compared to the more traditional and widely studied graph based models, the SINR model has been found to be realistic, and is enjoying
increased attention and adoption~\cite{MaheshwariJD08,Moscibroda2006Protocol,moscibroda06b}. This model (precisely defined in Section \ref{sec:model}) is based on a realistic geometry of signal propagation (compared to unrealistic graph based interference models).

We are thus interested in algorithms that keep queue sizes bounded
when faced with stochastic packet arrivals over arbitrary periods of
time, assuming the SINR interference model.  A network in which this
goal is achieved is called \emph{stable}.  Stability has been an
widely-studied metric for analyzing the performance of scheduling
algorithms for wireless networks for quite some time.  In a seminal
work, Tassiulas and Ephremides \cite{TE92}, gave a characterization of
those stochastic processes for which stability is possible in
principle. The characterization is general enough to work for
virtually any interference model.  In light of this, the goal for the
algorithm designer is to produce an (simple and efficient) algorithm
that stabilizes networks for all (or a fair chunk) of these arrival
processes.  There is a long tradition of such work, e.g.,
\cite{bestInfocom08,DBLP:conf/sigmetrics/ModianoSZ06},
but they almost exclusively apply to graph-based interference models.

The Tassiulas-Ephremides characterization \emph{is} formulated as a computational problem, which, if solved, would stabilize a network under all potentially stabilizable arrival rates.
However, for the SINR model, this problem (known as the maximum
weighted \emph{capacity} problem) is NP-hard \cite{DBLP:conf/infocom/AndrewsD09}. Not much is known about the algorithmic complexity of this problem (see \cite{infocom12} about a recent centralized result for linear power and more discussion about its relation to network stability), and almost nothing about possible distributed implementations. Thus, alternative approaches need to be sought.

In this work, we develop efficient and distributed
scheduling algorithms for wireless network stability ---
by applying intuitions developed in recent 
research on the SINR model (\cite{SODA11} and \cite{icalp11} contain many references), all of which provide approximation
algorithms to some relevant algorithmic questions.

One possible approach would be to apply algorithms for some of the
core optimization problems as black boxes. For example, there are
constant factor approximation algorithms for the capacity problem
\cite{HW09,SODA11,KesselheimSoda11}. These alone are not
sufficient, as there is no guarantee of fairness.  Still, they can be
easily turned into a $O(\log n)$-approximation for the weighted
capacity problem.  The problem with this approach, however,
is that these algorithms are centralized, with no effective
distributed algorithms in sight.  Distributed algorithms are
 of crucial importance in the current setting.  Our
approach is therefore more of a ``gray-box'' one -- while we adopt an
algorithm of \cite{KV10} as our basis, our analysis depends not on the
overall approximation factor, but on more subtle properties of that
algorithm.

Apart from being distributed, the main property a 
scheduling algorithm should have is high efficiency.
``Efficiency'' has a specific technical meaning which we define in Section \ref{sec:model}.
Intuitively, it captures
how well the algorithm does compared to the Tassiulas-Ephremides characterization.


We achieve, depending on the algorithm chosen, efficiency ratios of
$\Omega(\frac{1}{\log^2 n})$, $\Omega(\frac{1}{\log n})$ and
$\Omega(\frac{1}{\log n (\log n + \log\log \Delta)})$, that are
comparable or better than existing work on this topic.  Our main
algorithm requires only a ``carrier sense'' primitive to make it
completely distributed.  This is in contrast to many distributed
algorithms in the literature (e.g.,
\cite{DBLP:conf/sigmetrics/ModianoSZ06,lqfmobihoc}) that are better
described as ``localized'' -- requiring an underlying infrastructure
for wireless nodes to communicate with nearby nodes. 
This infrastructure, moreover, is usually not subject to the
interference constraints of the original network.  This is a rather
strong assumption, especially in light of the fact that in a wireless
network, one is presumably trying to establish such an infrastructure
in the first place.

The paper is organized as follows.  In Sections \ref{sec:model} and
\ref{sec:results} we present the system model, our results, and more
specific discussion on related work.  In Section \ref{sublin} we
describe a general algorithmic framework for wireless scheduling. We
then provide a specific instantiation of this framework that achieves
good throughput performance for a large class of power assignments in
general metric spaces, with implications for the power control case, where the power
can be selected by links separately. Finally, in Section \ref{sec:pc}, we
prove a more efficient, but centralized result for the power control case and present simulation results.

\section{Model and Preliminaries}
\label{sec:model}
\subsubsection*{The SINR Model}
The wireless network is modeled as a set $L$ of $n$ links, where
each link $l \in L$ represents a potential transmission from a sender
$s_l$ to a receiver $r_l$, both points in a metric space. 
The distance between two points $x$ and $y$ is denoted $d(x,y)$.  
The distance from
$l'$'s sender to $l$'s receiver is denoted $d_{l'l} = d(s_{l'}, r_{l})$.
The length of link $l$ is denoted 
simply by $\ell = d(s_l, r_l)$.

The set may be associated with a \emph{power assignment}, which is an assignment of a transmission
power $P_l$ to be used by each link $l \in L$.
The signal received at point
$y$ from a sender at point $x$ with power $P$  is $P/d(x, y)^\alpha$ where the constant 
$\alpha > 0$ is the \emph{path-loss exponent}. 

We can now describe the  \emph{physical} or SINR-model of interference. In this model, a receiver $r_l$
successfully receives a message from the sender $s_l$ if and only if the
following condition holds:
\begin{equation}
 \frac{P_l/\ell^\alpha}{\sum_{l' \in S \setminus  \{l\}} P_{l'}/d_{l'l}^\alpha + N} \ge \beta \ , 
 \label{eq:sinr}
\end{equation}
where $N$ is the environmental noise, the constant $\beta \ge 1$ denotes the minimum
SINR (signal-to-interference-noise-ratio) required for a message to be successfully received,
and $S$ is the set of concurrently scheduled links in the same \emph{slot} (we assume that time is slotted.).
We say that $S$ is \emph{SINR-feasible} (or simply \emph{feasible}) if (\ref{eq:sinr}) is
satisfied for each link in $S$.

A power assignment $P$ is \emph{length-monotone} if $P_v \ge P_w$
whenever $\ell_v \ge \ell_w$ and \emph{sub-linear} if
$\frac{P_v}{\ell_v^{\alpha}} \le \frac{P_w}{\ell_w^{\alpha}}$ whenever
$\ell_v \ge \ell_w$. This class includes the most interesting and
practical power assignments, such that uniform power (all links use
the same power), linear power ($P_l = \ell^{\alpha}$, known to be
energy efficient in the presence of noise), and mean power ($P_l =
\ell^{\alpha/2}$, the assignment that produces maximum capacity in
this class). We will also consider the ``power control'' case, where
the power assignments are not predetermined, but have to be found out
by the algorithm, and can be arbitrary.

Let $\Delta = \frac{\ell_{\max}}{\ell_{\min}}$, where $\ell_{\max}$ and $\ell_{\min}$ are, respectively, the maximum and minimum lengths in $L$.

\begin{defn}
The \textbf{affectance} $a^P_{l'}(l)$ of link $l$ caused by another link $l'$,
with a given power assignment $P$,
is the interference of $l'$ on $l$ relative to the power
received, or
  \[ a^P_{l'}(l) 
     = \min\left\{1, c_v \frac{P_{l'}}{P_l} \cdot
     \left(\frac{\ell}{d_{l' l}}\right)^\alpha\right\}\ ,
  \] 
where $c_v = \beta/(1 - \beta N \ell^\alpha/P_l)$. 
\end{defn}

The definition of affectance was introduced in \cite{GHWW09} and achieved the form
we are using in \cite{KV10}.
When clear from the context we drop the superscript $P$. 
Also, let $a^P_l(l) = 0$.
Using the idea of affectance, Eqn.~\ref{eq:sinr} can be rewritten as 
\[ a^P_S(l) \equiv \sum_{l' \in S}a^P_{l'}(l) \leq 1\ , \]
for all $l \in S$.

 A link can schedule at most one packet during a slot, in other words,
if a link has a queue, at most one packet from the queue can be scheduled during a single slot.

\subsubsection*{Stability of Stochastic Processes}

We assume that packets arrive at the sender of each link $l$ according to a stochastic
process with average arrival rate $m_l$.

We define stability as such.
\begin{defn}
An algorithm \textbf{stabilizes} a network for a particular arrival process if, under that
arrival process the average queue size at each link is bounded (ie, does not grow asymptotically with time).
\end{defn}
The \emph{throughput region} is then the set of all possible arrival rate
vectors such that there exists some scheduling policy that can stabilize the
network.
As proved in \cite{TE92}, the throughput region is characterized by
\[ \Lambda = \{\lambda : \lambda \preceq \phi, \text{for some } \phi
   \in Co(\Omega)\}\ , \]
here $\Omega$ is the set of vectors in $\mathbb{R}^n$ characterizing 
all maximal feasible sets (i.e., each vector in $\Omega$ is a binary vector with
1's in indices corresnponding to links belonging to the relevant maximal feasible set). 
$Co(\Omega)$ is the convex hull of $\Omega$; $\lambda$ and $\phi$ are vectors in $\mathbb{R}^n$, indicating arrival rates on links, and $\lambda \preceq \phi$ means that each entry of $\lambda$ is less than or equal to the corresponding entry in $\phi$.

In the best case, one would like stabilize all of $\Lambda$. If that is not possible, the hope is to achieve a high \emph{efficiency ratio}:

\begin{defn}
The \textbf{efficiency ratio} $\gamma$ of a scheduling algorithm 
is $\gamma = \sup\{\gamma: \text{all networks are}$ stabilized for all
$\lambda \in \gamma \Lambda \}$. 
The algorithm is then \textbf{$\gamma$-efficient}.
\end{defn}
We assume that the arrival processes are independent accross time (and links).
For a certain efficiency $\gamma$, for all permissible arrival rate vectors $\lambda$, $ \lambda \preceq \sum_{i} m_i M_i$ where each $M_i$ is a maximal feasible set, and $m_i$ are weights such that $\sum_i m_i  = \gamma$. Let the
expected arrival rate on a link $l$ be $m_l$, it can be easily seen that 
\begin{equation}
\label{linktofs}
m_l = \sum_{i: l \in M_i} m_i \leq \gamma \ .
\end{equation}

\section{Results and related work}
\label{sec:results}
Our main results are:

\begin{theorem}
\label{mainth1}
For all given networks with links on metric spaces, and all sub-linear, length-monotone 
power assignments, there exists a $\Omega(\frac{1}{\log^2 n})$-efficient distributed algorithm.
\end{theorem}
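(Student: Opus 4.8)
The plan is to reduce the theorem to a single per-slot (or per-frame) guarantee about a distributed scheduler and then close the argument with a Lyapunov drift computation. Let $Q_l(t)$ denote the queue length at link $l$ at the start of slot $t$ and let $D_l(t)\in\{0,1\}$ indicate whether $l$ is successfully served in slot $t$. With the quadratic Lyapunov function $V(\mathbf{Q})=\sum_l Q_l^2$, the recursion $Q_l(t+1)=(Q_l(t)-D_l(t))^{+}+A_l(t)$ and independence of the arrivals give
\[
E\bigl[V(\mathbf{Q}(t+1))-V(\mathbf{Q}(t))\mid \mathbf{Q}(t)\bigr]\ \le\ C + 2\sum_l Q_l(t)\bigl(m_l - E[D_l(t)\mid\mathbf{Q}(t)]\bigr),
\]
where $C$ absorbs the bounded second moments of the arrivals and the unit service. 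By Foster's negative-drift theorem it suffices to make the right-hand side negative outside a bounded set, i.e.\ to show that the expected weighted service $\sum_l Q_l(t)\,E[D_l(t)\mid\mathbf{Q}(t)]$ exceeds the weighted arrival rate $\sum_l Q_l(t)\,m_l$ by a positive margin that grows without bound as $\|\mathbf{Q}(t)\|\to\infty$. This is the only step where the algorithm matters.

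Because $\lambda\preceq\sum_i m_i M_i$ with $\sum_i m_i=\gamma$, for every nonnegative weight vector $\mathbf{Q}$ we get $\sum_l Q_l m_l\le\sum_i m_i\sum_{l\in M_i}Q_l\le\gamma\cdot W^\ast(\mathbf{Q})$, where $W^\ast(\mathbf{Q})=\max_{S\text{ feasible}}\sum_{l\in S}Q_l$. Since a singleton link is feasible in isolation, $W^\ast(\mathbf{Q})\ge\max_l Q_l$ is unbounded as the queues grow. Hence it is enough to build a distributed scheduler whose expected weighted service is at least $\kappa\cdot W^\ast(\mathbf{Q})$ for some $\kappa=\Omega(1/\log^2 n)$: fixing the admissible rate at a $\gamma$ slightly below $\kappa$ leaves a margin $(\kappa-\gamma)W^\ast(\mathbf{Q})$, which is exactly the unbounded quantity needed above and fixes the efficiency at $\Omega(1/\log^2 n)$.

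The factor $\log^2 n$ decomposes into a weighted-to-unweighted reduction times a distributed capacity approximation. First, group the backlogged links by queue length into $O(\log n)$ classes, class $k$ consisting of links with $Q_l\in[2^k,2^{k+1})$; the heaviest class carries an $\Omega(1/\log n)$ fraction of $W^\ast(\mathbf{Q})$, and within one class all weights agree up to a factor $2$, so maximizing weighted service there is, up to a constant, maximizing the cardinality of a feasible subset. Second, on each class I would run the carrier-sense, random-access protocol adapted from \cite{KV10}; invoking the length-monotone, sub-linear structure of $P$ and the affectance formulation $a^P_S(l)\le 1$, its analysis should guarantee that an $\Omega(1/\log n)$ fraction of the maximum feasible cardinality is served in expectation. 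Composing the two losses yields the $\kappa=\Omega(1/\log^2 n)$ weighted-service bound and completes the drift argument.

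The hard part is this per-slot distributed guarantee rather than the Lyapunov bookkeeping. A black-box appeal to the $O(\log n)$ capacity approximation of \cite{KV10} does not suffice: one must show the protocol serves a constant fraction of a feasible set while no node knows $n$, the topology, or even its own neighborhood, setting its transmission probability purely from carrier sensing. This demands the ``gray-box'' analysis advertised in Section \ref{sublin} --- tracking the affectance each link accumulates from its own class instead of the global approximation ratio --- and it must run the $O(\log n)$ length classes concurrently (by interleaving, or by randomizing the class each slot) without letting them destroy one another's feasibility. Establishing that this fully distributed procedure still attains the $\Omega(1/\log n)$ cardinality bound per class, and hence the required service margin, is the crux on which the whole theorem rests.
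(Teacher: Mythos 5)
Your Lyapunov/max-weight reduction is internally consistent as bookkeeping, but it is not the paper's route, and the step you yourself flag as ``the crux'' is a genuine gap rather than a deferred detail. Your drift argument needs a fully distributed protocol that, in a \emph{single slot}, serves an expected $\Omega(1/\log n)$ fraction of the maximum-cardinality feasible subset of a weight class. Nothing in \cite{KV10} (or in this paper) supplies that: the guarantee available for the distributed random-access procedure \alg{Distr-SingleLink} is Theorem~\ref{algeff}, a bound on the expected number of slots needed to schedule \emph{all} of a link set $R$ in terms of its maximum average affectance $\avgA(R)$. That is a statement about total completion time over many slots with geometrically decaying transmission probabilities, and it does not convert into the per-slot weighted-service guarantee your drift inequality requires. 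There is also a quantitative leak: to act on the heaviest of your $O(\log n)$ queue classes without global knowledge you must interleave or randomize over classes, which costs another $\log n$ factor and threatens to turn your $\kappa$ into $\Omega(1/\log^3 n)$ unless handled more delicately than you indicate.

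The paper sidesteps the per-slot question entirely. It batches arrivals into periods of length $\theta = 10\mainC^2\log^2 n$, runs the scheduling algorithm $\calA$ of \cite{KV10} on each batch, and serves the resulting feasible sets FIFO. The whole argument then rests on Lemma~\ref{schednumberbound1}: the expected schedule length of one period's arrivals is $O(\log^2 n) < \theta$. This is proved not from the approximation ratio of $\calA$ but from Theorem~\ref{algeff} together with the affectance bounds (\ref{afftohig}) and (\ref{afffromhig}) for feasible sets, the arrival-rate bound (\ref{loadbound1}), and a Chernoff--Hoeffding bound showing that the accumulated affectance $\avgA$ of a random batch has expectation $O(\theta/\log n)$. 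Stability then follows from an elementary single-server queueing argument (expected work arriving per period is strictly less than the period length $\theta$), with no Lyapunov function and no max-weight oracle; the distributed implementation only has to enforce the FIFO order across periods via carrier sensing. If you want to salvage your approach you would have to prove the per-slot distributed capacity approximation from scratch; adopting the batch-and-schedule frame lets you use the guarantees that actually exist.
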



\begin{theorem}
\label{mainth2}
For all given networks with links on the Euclidean plane,
there exists a $\Omega\left(\frac{1}{\log n (\log n + \log \log \Delta)}\right)$-efficient distributed power control algorithm.  A 
centralized algorithm exists that achieves
$\Omega\left(\frac{1}{\log n}\right)$ efficiency.
\end{theorem}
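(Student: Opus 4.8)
The plan is to reduce both claims to the quality of an underlying \emph{capacity approximation with power control}, and then feed that into the Lyapunov-style drift framework of Section~\ref{sublin}. Concretely, at each step the relevant algorithmic task is to realize a feasible set $S$ whose total queue-weight $\sum_{l \in S} q_l$ is as large as possible, where now the algorithm is also free to pick the powers $P$. By the same drift argument underlying Theorem~\ref{mainth1}, if one can always realize (possibly over a frame of several slots, amortized per slot) a feasible set carrying a $1/\rho$ fraction of the maximum attainable weight, then arrival-rate vectors in $\frac{1}{c\rho}\Lambda$ admit negative drift and the network is stabilized, giving efficiency $\Omega(1/\rho)$. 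The two statements then amount to exhibiting a centralized scheduler with $\rho = O(\log n)$ and a distributed one with $\rho = O(\log n(\log n + \log\log\Delta))$. A point to check throughout is that the framework is indifferent to whether powers are fixed or chosen by the subroutine, which should follow because the drift computation only uses feasibility of the scheduled set, expressed through $a^P_S(l)\le 1$.

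For the centralized claim I would start from a constant-factor approximation for the \emph{unweighted} power-control capacity problem on the Euclidean plane, such as that of \cite{KesselheimSoda11}, which returns a feasible set together with powers whose size is within $O(1)$ of the optimum. The standard weighted-to-unweighted reduction then costs only an $O(\log n)$ factor: bucket the links into $O(\log n)$ geometric weight classes by queue length, run the capacity algorithm separately within each class, and output the best of the resulting feasible sets. This yields $\rho = O(\log n)$ and hence the claimed $\Omega(1/\log n)$ centralized efficiency.

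For the distributed claim the base object is the carrier-sensing algorithm of \cite{KV10}, used again as a ``gray box'': rather than its overall approximation factor, we exploit that, under a \emph{given} power assignment, within $O(\log n)$ slots it schedules a constant fraction of any feasible set presented to it. The new difficulty is that the power assignment is not fixed in advance. I would precede the scheduler by a short distributed power-selection phase in which links settle, using only carrier sensing, on one of a small discretized family of power levels, with the aim that the chosen powers make an $\Omega(1)$ fraction of the optimal power-controlled feasible set simultaneously feasible. Designing this phase to terminate in $O(\log\log\Delta)$ slots, and then running the base scheduler once on the selected powers, yields a distributed subroutine that realizes a constant fraction of the optimal power-controlled weight in a frame of $O(\log n + \log\log\Delta)$ slots; amortized per slot this is a $1/O(\log n + \log\log\Delta)$ fraction. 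Composing with the same $O(\log n)$ weighted-to-unweighted lifting produces $\rho = O(\log n(\log n + \log\log\Delta))$ and the claimed distributed efficiency.

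The main obstacle is the distributed power selection responsible for the $\log\log\Delta$ factor: without any topology information, the links must nonetheless settle on powers that render a near-maximum-weight subset simultaneously feasible. The crux is a structural lemma, specific to the Euclidean plane, showing that it suffices to consider an $O(\log\log\Delta)$-sized family of power regimes, so that a carrier-sensing link can locally test feasibility against each regime in turn without explicit coordination. Establishing this lemma, and verifying that the resulting randomized distributed schedule still satisfies the bound $a^P_S(l)\le 1$ required by the drift argument with only a constant loss, is where I expect the bulk of the technical effort to lie.
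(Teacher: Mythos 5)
Your centralized half is essentially the paper's: the paper explicitly notes that the constant-factor capacity algorithms of \cite{KesselheimSoda11} ``can be easily turned into a $O(\log n)$-approximation for the weighted capacity problem,'' and the $\Omega(1/\log n)$ efficiency follows; your geometric bucketing of queue lengths is the standard way to carry out that reduction, so no complaint there beyond the fact that the paper (like you) leaves the details implicit.

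The distributed half, however, has a genuine gap, and it is exactly the step you flag as ``where I expect the bulk of the technical effort to lie.'' Your plan hinges on a distributed power-selection phase, terminating in $O(\log\log\Delta)$ slots using only carrier sensing, that settles on powers rendering an $\Omega(1)$ fraction of the optimal power-controlled feasible set simultaneously feasible, together with an unproved structural lemma about an $O(\log\log\Delta)$-sized family of power regimes. Neither ingredient is established, and the constant-fraction goal is more than is needed (and likely more than is achievable by any oblivious or lightly-coordinated scheme). The paper avoids power selection entirely: it fixes the \emph{mean power} assignment $P_l = \ell^{\alpha/2}$ once and for all, which is length-monotone and sub-linear, so the distributed algorithm and the entire analysis of Theorem~\ref{mainth1} apply verbatim. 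The only thing that changes is the affectance bound: by \cite{DBLP:conf/esa/Halldorsson09,SODA11}, for any set $M$ that is feasible under \emph{some arbitrary} power assignment one has $\sum_{l' \in M,\, l' \geq l} a_{l'}(l) \leq c_3 (\log n + \log \log \Delta)$ when affectance is measured under mean power. Substituting this for the $O(\log n)$ bound of Eqn.~\ref{afftohig} (and its analogue Eqn.~\ref{afffromhig}) in the proof of Lemma~\ref{schednumberbound1} is what produces the extra $(\log n + \log\log\Delta)$ factor in place of one $\log n$, giving $\Omega\bigl(\tfrac{1}{\log n(\log n + \log\log\Delta)}\bigr)$. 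So the $\log\log\Delta$ is the price of using a single oblivious power assignment against an adversary who may use arbitrary powers --- not the running time of a power-negotiation protocol. To repair your proof, replace the power-selection phase with this fixed assignment and the cited affectance lemma; as written, your argument does not go through.
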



We are aware of two earlier papers on stability in the SINR model.
In \cite{lqfmobihoc}, the authors study the Longest Queue First algorithm (a classical algorithm that can be seen as a natural extension of maximal weighted matching). They show that LQF is not stable, but a variant works well.
A ``localized'' implementation is provided, i.e., it is shown that the algorithm can be implemented in a distributed manner if links can communicate with
other ``neighboring'' links arbitrarily. The achieved efficiency ratio in
$\Omega(\frac1{\Delta^{\alpha}})$ (the dependence on $\Delta$ is not explicitly mentioned, but can be seen to be necessary). Our recent paper \cite{ciss12} is a companion of the current work, 
where an extremely simple and completely distributed algorithm achieving $\Omega(\frac1{\Delta^{\alpha}})$ efficiency is introduced.
In comparison, the results in the current work involve efficiency that is logarithmically dependent on $n$ (and, in one case, doubly logarithmic in $\Delta$). 
Dependence on $n$ and $\Delta$ are theoretically not comparable, and either could be preferable in practice. The distributed algorithm in this paper has to assume
a carrier-sense primitive, which is not assumed in \cite{ciss12}. However, it does not need to have a special communication infrastructure with neighboring links.

The body of work on wireless network stability in \emph{other} models is too vast to survey properly. In terms of efficiency ratio, 
a range of results have been derived
in a variety of models. Naturally one seeks efficiency of $1$ \cite{DBLP:conf/sigmetrics/ModianoSZ06} whenever
possible, but results for efficiency ratios of $1$ under certain conditions \cite{secordorder}, or $\frac{1}{6}$ \cite{bestInfocom08} can be found in the literature.
Ratios in terms of certain network characteristics are known as well \cite{bestInfocom08,lqfmobihoc}. 
For the SINR model, which is being studied only very recently, an efficiency ratio of a
constant that is independent of network parameters is not known. 

Technically, we depend heavily on \cite{KV10} that provides a $O(\log^2 n)$ approximation algorithm for the scheduling problem. The algorithm and technical aspects of this work used here will be introduced in the following section as needed.

We are aware of a very recent unpublished work of Kesselheim \cite{dynamickesselheim} achieving results in the SINR model very similar to the present paper.

\section{Main Algorithm}
\label{sublin}

The basic algorithmic framework used is listed as \alg{General} below. For simplicity,
we treat it as a centralized procedure first and discuss distributed implementations later.

\begin{algorithm}                      
\caption{General($\theta$, $\calA$)}          
\begin{algorithmic}[1]                    
     \STATE The algorithm maintains a FIFO queue $\mathcal{S}$ of sets, such that each $S\in \mathcal{S}$ is feasible. 
     \STATE At the beginning, $\mathcal{S} \leftarrow \emptyset$.
     \FOR{time $t \leftarrow{} 1, 2, \ldots $}
       \IF{$\mathcal{S}$ is non-empty}
         \STATE Schedule the first $S \in \mathcal{S}$         
         \STATE $\mathcal{S}\leftarrow{}\mathcal{S} - S$
       \ENDIF
       \IF{$t{}\bmod{}\theta = 1$}
	\STATE Let $L = $ new packet arrivals in the time period $[t - \theta, t - 1]$
	\STATE $q = t / \theta$
	\STATE Use algorithm $\calA$ to find 	   a schedule $\mathcal{R}_q = \cup_j R'_j$ for $L$
	\STATE Append $\mathcal{R}_q$ (in any order) to $\mathcal{S}$
	\ENDIF
       \ENDFOR
\end{algorithmic}
\label{alg1fig}
\end{algorithm}

The algorithm takes two parameters. One is $\theta$, a number that defines the ``period''
of the algorithm. The second parameter is  an algorithm $\calA$ which can
solve the \emph{scheduling problem}, used
as a black box by \alg{General} to compute schedules. The \emph{scheduling problem} is the optimization
problem where given a set $L$ of links, one seeks to partition $L$ into minimum number of sets such
that each of these sets is feasible (i.e., can be transmitted in one slot). Since the problem is NP-hard,
we will work with approximation algorithms. Depending on the result we seek, we will set $\theta$ and $\calA$
accordingly.

\alg{General} can be alternatively described in the following way. The algorithm divides 
the time slots into consecutive
periods of length $\theta$ each. Let us denote these periods as $C_1, C_2 \ldots$ etc.
At the beginning of period $C_q$, the algorithm
computes a schedule $\mathcal{R}_{q-1}$ of the links produced in $C_{q-1}$. It does so using
$\calA$. \alg{General} then adds these computed feasible sets to the set $\mathcal{S}$. 
Now during each slot of $C_q$, the algorithm schedules
the first set from $\mathcal{S}$ (which is implemented as a FIFO queue). 
It does this until $C_q$ ends, in which case it moves on to the next period, or
until  $\mathcal{S}$ is empty, in which case it waits until the end of $C_{q}$. Note that there is
nothing to schedule during $C_1$, we just wait during this time. 

Let $Q^t_l$ be the queue length at link $l$ at time $t$.
First, note that $Q^t_l \leq S^t$ for all $l$, where $S^t =
|\mathcal{S}|$ at time $t$ ($\mathcal{S}$ is as in the algorithm $\alg{General}$). This is the
number of slots we require  to schedule all links outstanding at time $t$. Obviously, one cannot schedule all links
in time less than the size of the longest queue (since copies of the same link cannot be scheduled together).
Thus a bound on $S^t$ immediately gives us a bound on $Q^t_l$ (for all $l$). Consequently, from now on we will focus on
bounding $S^t$. Also note that it suffices to bound $S^t$ on period boundaries, i.e., at times $t$ such that $t \bmod \theta = 1$. This is because, in expectation, the queue lengths cannot grow by much during the course of a period.
Let $\mainC$ be a large enough constant.

For simplicity, we will assume that the arrival distributions at every link $l$ is a Bernoulli random variable with mean $m_l$. 
To prove Theorem \ref{mainth1}, we set $\theta = 10 \mainC^2 \log^2 n$. 

\subsection{The scheduling algorithm}
We also select as $\calA$
the scheduling algorithm described in \cite{KV10}. It is known \cite{icalp11} that
this algorithm achieves a O$(\log n)$-approximation factor to the scheduling problem. We are, however, interested in a slightly different performance measure of the algorithm in \cite{KV10}. For a link set $R$, 
define the \emph{maximum average affectance}
$\avgA(R) = \max_{Q \subseteq R} \frac{1}{|Q|}\sum_{l \in Q}\sum_{l' \in Q} a_{l}(l')$. It is known that:

\begin{theorem}{\cite{KV10}}
  The algorithm $\calA$ has expected running time of at most $\mainC \log n \cdot \avgA(R)$ on a link set $R$.
  \label{algeff}
\end{theorem}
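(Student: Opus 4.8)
The plan is to analyze $\calA$ as a randomized contention-resolution process in which every link that has not yet succeeded attempts to transmit in each slot with a probability $p$ that is (adaptively) tuned to $\Theta(1/\avgA(R))$, and to show that each slot removes, in expectation, a constant fraction of the currently active links after scaling by $\avgA(R)$. The single combinatorial fact driving everything is that $\avgA(R)$ is a \emph{maximum} over subsets: hence for any set $Q$ that can arise as the active set at some point of the execution (necessarily $Q \subseteq R$) we have $\sum_{l \in Q}\sum_{l' \in Q} a_l(l') \le |Q|\cdot \avgA(R)$. Writing $a_Q(l) = \sum_{l' \in Q} a_{l'}(l)$ for the total affectance incoming to $l$, and noting that $\sum_{l,l' \in Q} a_l(l') = \sum_{l \in Q} a_Q(l)$ by relabeling, this says the \emph{average} incoming affectance over $Q$ is at most $\avgA(R)$ --- and, crucially, this bound holds robustly for every active set throughout the run, not merely for $R$ itself.

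First I would establish a per-slot progress lemma. Fix the active set $Q$ and call a link \emph{good} if $a_Q(l) \le 2\avgA(R)$; by Markov's inequality at least half of $Q$ is good. Setting $p = \Theta(1/\avgA(R))$, for a good link $l$ the expected affectance it receives, conditioned on $l$ transmitting while the others transmit independently, is $p\cdot a_Q(l) \le 1/2$, so by Markov again $l$ meets the feasibility condition $a_T(l)\le 1$ with probability at least $1/2$. Hence each good link succeeds in the slot with probability $\Omega(1/\avgA(R))$, and the expected number of links removed is $\Omega(|Q|/\avgA(R))$, yielding the multiplicative recurrence $E[\,|Q_{t+1}| \mid Q_t\,] \le |Q_t|\,(1 - \Omega(1/\avgA(R)))$.

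Then I would compose slots. Iterating the recurrence with the tower rule gives $E[|Q_t|] \le n\, e^{-\Omega(t/\avgA(R))}$, so $O(\avgA(R)\log n)$ slots drive the expected number of active links below $1$. To convert this into a bound on the running time $\tau$ I would write $E[\tau] = \sum_{t\ge 0} \Pr[|Q_t| \ge 1] \le \sum_{t\ge 0}\min\{1, E[|Q_t|]\}$, bound the first $O(\avgA(R)\log n)$ terms by $1$, and sum the remaining geometric tail to a further $O(\avgA(R))$, obtaining $E[\tau] \le \mainC \log n \cdot \avgA(R)$ for a suitable constant $\mainC$. The $\min\{1,\cdot\}$ cap in the affectance definition makes each $a_{l'}(l)$ a bounded variable, which keeps all the Markov steps clean (and would also permit Chernoff-type concentration if a high-probability statement were wanted).

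The main obstacle is that $\calA$ is distributed and does not know $\avgA(R)$ in advance, so it cannot simply set $p = \Theta(1/\avgA(R))$. The genuine work --- and the part properly attributable to \cite{KV10} --- is showing that the adaptive rule by which each link updates its own transmission probability self-regulates into the correct regime: while $p$ is far too large, collisions dominate and probabilities shrink, while $p$ is far too small, slots are wasted but probabilities grow geometrically, so the ``good'' window $p = \Theta(1/\avgA(R))$ is reached after only $O(\log n)$ wasted slots and the overhead is absorbed into the stated bound. Verifying that this local, uncoordinated adaptation reproduces the global per-slot progress lemma above --- rather than assuming a centrally chosen $p$ --- is the delicate step, and here the max-over-subsets inequality is exactly what guarantees the progress bound remains valid for whatever active set the adaptive process happens to produce.
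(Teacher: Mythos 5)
First, a point of reference: the paper itself gives no proof of Theorem~\ref{algeff} --- it is imported wholesale from \cite{KV10}, and the only material in the paper you could check your argument against is the listing of \alg{Distr-SingleLink} and the surrounding discussion. So the comparison here is between your reconstruction and that algorithm. Your central combinatorial engine is right and is essentially the standard one: because $\avgA(R)$ is a maximum over subsets, every active set $Q\subseteq R$ has average incoming affectance at most $\avgA(R)$; Markov gives that half of $Q$ is ``good'' (incoming affectance at most $2\avgA(R)$); with transmission probability $p\le 1/(4\avgA(R))$ a good link clears the SINR condition with probability at least $1/2$ conditioned on transmitting (using $a_l(l)=0$ and independence); and the resulting recurrence $\Ex[|Q_{t+1}|\mid Q_t]\le |Q_t|(1-\Omega(1/\avgA(R)))$ composes to $O(\avgA(R)\log n)$ slots. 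One feature of the real algorithm that silently rescues your uniform-$p$ assumption is worth making explicit: since all links in $R$ start simultaneously and the phase lengths $8\ln n/q$ are deterministic, all still-active links are always in the same phase and hence share the same $q$.

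The genuine gap is in your final paragraph, where you attribute the ``real work'' to analyzing an adaptive feedback rule in which probabilities shrink under collisions and grow when slots are idle. \alg{Distr-SingleLink} does nothing of the sort: it is an \emph{oblivious}, monotonically decreasing doubling schedule ($q=1/(4\cdot 2^k)$, each phase lasting exactly $8\ln n/q$ slots, with the only feedback being ``stop when you succeed''). There is no self-regulation to analyze, and the $\log n$ factor in the theorem does not come from ``$O(\log n)$ wasted slots finding the right window''; it comes from the $8\ln n/q$ phase length. The step your sketch actually needs, and omits, is the accounting for this schedule: (i) the total time spent reaching the critical phase $k^*$ with $2^{k^*}\approx\avgA(R)$ is $\sum_{k\le k^*}32\ln n\cdot 2^k=O(\log n\cdot\avgA(R))$; and (ii) for the expectation bound, the tail must be controlled against the fact that later phases \emph{double} in length while the per-slot success probability halves --- one shows each phase beyond $k^*$ reduces the surviving set by a factor $n^{-\Omega(1)}$ (since a phase contains $8\ln n/q$ slots each removing a $q/4$ fraction), so the expected contribution of phase $k>k^*$ is $O(\log n\cdot 2^k\cdot n^{-\Omega(k-k^*)})$, which sums geometrically. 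Your per-slot tail sum $\sum_t\min\{1,\Ex|Q_t|\}$ implicitly assumes the decay rate $e^{-\Omega(t/\avgA(R))}$ persists at fixed $p$ forever, which is false for the actual algorithm; the conclusion survives, but only via the phase-by-phase argument above.
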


We now turn our attention to proving the stability of the algorithm, i.e., Thm.~\ref{mainth1}.
Given the efficiency claimed in Thm.~\ref{mainth1}, it sufficient to deal with stochastic processes satisfying
\begin{equation}
\sum_{M_i} m_i \leq \frac1{5 \rho^2 \log^2 n} \ .
\label{loadbound1}
\end{equation}

\begin{lemma}
\label{schednumberbound1}
$\Ex(|\mathcal{R}_q|) \leq O(\log^2 n) < \theta$  for all $q$.
\end{lemma}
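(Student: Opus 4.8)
The plan is to reduce the bound on $\Ex[|\mathcal{R}_q|]$ to a bound on the expected maximum average affectance $\Ex[\avgA(L)]$ of the random arrival set $L$, and then to control the latter. First I would apply Theorem~\ref{algeff}: conditioning on $L$ and taking expectation over the internal randomness of $\calA$ gives $\Ex[|\mathcal{R}_q|] = \Ex_L\big[\Ex_{\calA}[\,|\calA(L)|\mid L]\big] \le \mainC \log n \cdot \Ex_L[\avgA(L)]$ by the tower rule. Hence it suffices to show $\Ex_L[\avgA(L)] = O(\log n)$.

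Next I would dispose of the awkward maximum over subsets in the definition of $\avgA$. For any $Q \subseteq L$ and any fixed $l \in Q$ we have $\sum_{l' \in Q} a_{l'}(l) \le \sum_{l' \in L} a_{l'}(l)$, so the average over $l \in Q$ is dominated by a single element and $\avgA(L) \le \max_{l \in L} \sum_{l' \in L} a_{l'}(l)$, the maximum \emph{in-affectance} of a link from $L$. Since in-affectance depends only on the link carrying a packet and there are at most $n$ distinct links, this converts the problem into a maximum over at most $n$ quantities, which I can attack with a union bound.

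For a fixed link $k$, write its in-affectance as $X_k = \sum_{l'} n_{l'}\, a_{l'}(k)$, where $n_{l'}$ is the number of packets arriving on $l'$ during the period (so $\Ex[n_{l'}] = m_{l'}\theta$, the $n_{l'}$ are independent across links, and $a_k(k)=0$). Then $\Ex[X_k] = \theta \sum_{l'} m_{l'}\, a_{l'}(k) = \theta \sum_i m_i \sum_{l' \in M_i} a_{l'}(k)$, using $m_{l'} = \sum_{i: l' \in M_i} m_i$. Here I would invoke the key structural property of SINR-feasible sets under length-monotone, sub-linear power: the total affectance exerted by a feasible set on any single link is $O(1)$, i.e.\ $\sum_{l' \in M_i} a_{l'}(k) \le \kappa$ for an absolute constant $\kappa$. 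Combined with the load bound \eqref{loadbound1}, $\sum_i m_i \le 1/(5\mainC^2 \log^2 n)$, and $\theta = 10\mainC^2 \log^2 n$, the logarithmic factors and the factors of $\mainC^2$ cancel, leaving $\Ex[X_k] \le 2\kappa = O(1)$, uniformly in $k$.

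Finally I would upgrade the per-link bound in expectation to a bound on the maximum via concentration. Writing $X_k = \sum_{l'}\sum_{s} B_{l',s}\, a_{l'}(k)$ as a sum of independent contributions, one per $(\text{link},\text{slot})$ pair, each lying in $[0,1]$ because affectance is capped at $1$, a Chernoff bound gives $\Pr[X_k > c \log n] \le n^{-\Omega(c)}$ since $\Ex[X_k]=O(1)$; a union bound over the at most $n$ links then yields $\max_k X_k = O(\log n)$ with high probability, and the negligible tail contributes nothing to the expectation, so $\Ex_L[\avgA(L)] \le \Ex[\max_k X_k] = O(\log n)$. Chaining back through Theorem~\ref{algeff} gives $\Ex[|\mathcal{R}_q|] = O(\mainC \log^2 n)$, which is smaller than $\theta = 10\mainC^2 \log^2 n$ once $\mainC$ is a large enough constant. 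The main obstacle is the structural affectance bound $\sum_{l' \in M_i} a_{l'}(k) = O(1)$ (and an $O(\log n)$ bound here would already suffice for the final estimate): this is where feasibility of the $M_i$ together with the length-monotone, sub-linear assumption are essential, since without the geometric packing forced by feasibility a single receiver could in principle absorb unbounded affectance. A secondary technical point is that individual $n_{l'}$ can be as large as $\theta$, which would spoil a naive Chernoff bound; splitting into the per-slot Bernoulli contributions $B_{l',s}$ keeps every increment in $[0,1]$ and resolves this cleanly.
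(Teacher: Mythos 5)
Your overall architecture (reduce to $\Ex[\avgA]$ via Theorem~\ref{algeff}, bound the per-link expected affectance using Eqn.~\eqref{linktofs} and a structural affectance bound for feasible sets, then Chernoff plus union bound over per-slot Bernoulli increments) matches the paper's. But there is a genuine gap at the step you yourself flag as the main obstacle, and it is fatal as stated. You bound $\avgA(L)$ by the maximum \emph{total} in-affectance $\max_k \sum_{l'\in L} a_{l'}(k)$ and then need $\sum_{l'\in M_i} a_{l'}(k) = O(1)$ (or even $O(\log n)$) for every feasible set $M_i$ and every link $k$. No such bound holds: a feasible set can contain up to $n-1$ short links, each within distance $\ell_k$ of the receiver of a long link $k$ outside the set, and each of these contributes affectance essentially $1$ (the cap) to $k$, so the unrestricted in-affectance of a feasible set on an arbitrary link can be $\Omega(n)$. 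The bounds that actually exist in \cite{KV10} — Eqn.~\eqref{afffromhig} ($\sum_{l'\in M_i,\, l'\ge l} a_{l'}(l)\le \rho$) and Eqn.~\eqref{afftohig} ($\sum_{l'\in M_i,\, l'\ge l} a_{l}(l')\le \rho\log n$) — are both restricted to the \emph{longer} links $l'\ge l$, and that restriction is essential. This is exactly why feasibility does not save you here: it controls affectance received by members of the set and affectance exchanged with longer links, not affectance dumped by many short feasible links onto one long outsider.

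The paper's fix is a different, less lossy decomposition of $\avgA$: in the double sum $\sum_{l\in Q}\sum_{l'\in Q}a_l(l')$, charge each ordered pair to the \emph{shorter} link of the pair. This writes $\avgA \le A^+_{\max} + A^-_{\max}$, where $A^+(l)$ is the out-affectance from $l$ to longer arriving links and $A^-(l)$ is the in-affectance to $l$ from longer arriving links; both are then sums to which Eqns.~\eqref{afftohig} and \eqref{afffromhig} apply, yielding expectations $O(\theta/(\rho\log^2 n))$ per feasible set times $\rho\log n$, i.e.\ $\Ex[A^{\pm}(l)] \le \theta/(5\rho\log n)$, after which your concentration and union-bound steps go through essentially verbatim. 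So you need to replace your single ``max in-affectance'' reduction by this two-sided, longer-links-only decomposition; the rest of your argument (tower rule over the algorithm's randomness, per-slot splitting to keep increments in $[0,1]$, Chernoff plus union bound, and the final constant chase against $\theta = 10\rho^2\log^2 n$) is sound.
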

\begin{proof}
Define $a^{t+}(l)$ to be the outgoing affectance from a link $l$ to longer links appearing in slot $t$. Formally, if $X_l(t)$ is the Bernoulli random variable denoting the number of packets that arrived at the sender of link $l$ during slot $t$, then
\begin{equation}
a^{t+}(l) \equiv \sum_{l' \in L, l' \geq l} a_{l}(l') X_{l'}(t) \ .
\label{aplusdef}
\end{equation}
Let $A^+(l)$ be the sum of $a^{t+}(l)$ over all $\theta$ slots in the period, or,
\[ A^+(l) = \sum_{t = 1}^{\theta} a^{t+}(l) \ . \]
Now, we claim,
\begin{claim}
$\Ex(A^+(l)) \leq \frac{\theta}{\effC \log n}$.
\label{expectedA}
\end{claim}
\begin{proof}
In \cite{KV10}, it is shown that for any feasible set $M_i$
\begin{equation}
\label{afftohig}
\sum_{l' \in M_i, l' \geq l} a_l(l') \leq \mainC \log n \ .
\end{equation}

Thus, for any time slot $t$,
\begin{align*}
\Ex(a^{t+}(l)) & \overset{1}{=} \sum_{l' \in L, l' \geq l} a_{l}(l') \Ex(X_{l'}(t)) \overset{2}{=} \sum_{l' \geq l} a_l(l') m_{l'} \overset{3}{\le} \sum_{l' \geq l} a_l(l') \sum_{i: l' \in M_i} m_i \\
& \overset{4}{=} \sum_{i} m_i \sum_{l' \in M_i, l' \geq l} a_l(l') 
 \overset{5}{\le} \mainC \log n \sum_i m_i \leq \frac{1}{\effC \log n} \ ,
\end{align*}
 with explanations of numbered (in)equalities:
\begin{enumerate}
\item By definition of $a^{t+}(l)$ (Eqn.~\ref{aplusdef})
\item $\Ex(X_{l'}(t)) = m_{l'}$ by definition, since they both express the expected number of packets arriving in each time slot on $l'$.
\item By Eqn.~\ref{linktofs}.
\item Rearrangement.
\item By Eqn. \ref{afftohig}.
\item By Eqn.~\ref{loadbound1}.
\end{enumerate}
\end{proof}

Now, by the Chernoff-Hoeffding inequality (see, for example, Eqn.~1.8, Thm.~1.1 of \cite{dubhashi}):
\[ \Pro(A^+(l) \geq r) \leq \frac{1}{2^{r}}\ , \]
for all $r \geq \frac{2 \theta}{\effC \log n} = 4\rho\log n$. Defining $A^+_{\max} = \max_l A^+(l)$, and union bounding we get,

\begin{equation}
\Pro(A^+_{\max} \geq r) \leq \frac{n}{2^{r}} \ .
\end{equation}

\noindent We analogously define 
$a^{t-}(l) \equiv \sum_{l' \in L, l' \geq l} a_{l'}(l) X_{l'}(t)$ to be the incoming affectance from longer links. We similarly define $A^-_{max}$, and obtain that
$\Pro(A^-_{\max} \geq r) \leq \frac{n}{2^{r}}$,
which depends on the bound
\begin{equation} 
\label{afffromhig}
\sum_{l' \in M_i, l' \geq l} a_{l'}(l) \leq \mainC \ ,
\end{equation}
also proven in \cite{KV10}. (Note that the bound here is
tighter than Eqn.~\ref{afftohig}).

It is not hard to verify that that
$\avgA \leq A^+_{\max} + A^-_{\max}$. Thus,
\begin{align*}
\Ex(\avgA) & \leq \Ex(A^+_{\max}) + \Ex(A^-_{\max}) \leq 2 \Ex(A^+_{\max}) \\
& \leq \frac{2 \theta}{\effC \log n} + \sum_{i = 1}^{\infty} 2^i \frac{2 \theta}{\effC \log n} \frac{n}{2^{2^i \frac{2 \theta}{\effC \log n}}} \\
& \leq  \frac{2 \theta}{\effC \log n}  + \frac{2 \theta}{\effC \log n} 
 = \frac{4 \theta}{\effC \log n} \ .
\end{align*}

Now, by Theorem \ref{algeff}, $\Ex_{\calA}(|\mathcal{R}_q|) \leq \mainC \log n \cdot \avgA$, where $\Ex_{\calA}$ denotes expectation over the random bits of the algorithm (which is randomized).  Noting that the random bits of the algorithm are independent of the arrival process, we can use the bound on $\Ex(\avgA)$
to claim that 

\[ \Ex(|\mathcal{R}_q|) \leq \mainC \log n \cdot \frac{4 \theta}{\effC \log n} < \theta \ .\]
\end{proof}

Now we can prove Thm.~\ref{mainth1}. 
Note that by
Lemma \ref{schednumberbound1}, the expected scheduling cost required for packets produced
during a single period ($\Ex(|R_q|)$) is strictly smaller than the scheduling capacity of a single period ($\theta$).
With this observation, we can reduce our system to a very basic queueing system:

\begin{itemize}
\item A single server, an infinite queue, and slotted time. The time slots in this system correspond to the periods of the original system.
\item At the beginning of each slot a fixed number $s$ of packets are served and leave the system ($s$ corresponds to $\theta$).
\item At the end of every slot, a random number of new packets arrive. This is a random variable $A$ on the  non-negative
integers, and $\Ex(A) = a$ ($a$ corresponds to $\Ex(|\mathcal{R}_q|)$).
\end{itemize}
Now, if $a < m$, the corresponding countable Markov chain has a stationary distribution, and if $a$ is square integrable,
the expected queue length will be finite (see \cite{Asmussen}, for example). The condition $a < m$ is easily seen to be true 
(since $\Ex(|\mathcal{R}_q|) < \theta$ by Lemma \ref{schednumberbound1}). Square integrability of 
$|\mathcal{R}_q|$ follows from the fact that $\mathcal{R}_q$ admits a large deviation bound (this
is implicit in the proof of Lemma \ref{schednumberbound1}). Of course the ``queue length'' of this system 
corresponds to $S^t$, the number of slots the outstanding packets would require to be scheduled. 
Thus we have proven  $\Ex(S^t)$ to be bounded, and as observed before, this is enough to complete the proof of the theorem.

\subsection{Implications for the power control problem}

It was shown
in \cite{DBLP:conf/esa/Halldorsson09,SODA11} that
\emph{mean power} (where $P_l$ is set to $\ell^{\alpha/2}$) achieves a
good approximation to the power control problem. Since
Thm.~\ref{mainth1} covers the mean power assignment, this gives us a
 distributed algorithm for the power control problem. To
achieve the bound claimed in Thm.~\ref{mainth2} for the distributed
algorithm, the main ingredient is the following bound (analogous to
Eqn.~\ref{afftohig}).

\begin{lemma}[\cite{DBLP:conf/esa/Halldorsson09,SODA11}]
If M is a feasible set with respect to \emph{any} power assignment,
\begin{equation} 
\sum_{l' \in M, l' \geq l} a_{l'}(l) \leq c_3 (\log n + \log \log \Delta) \ ,
\end{equation}
where the affectance $a_{l'}(l)$ is measured using mean power.
\end{lemma}
With this bound (and a similar analog of Eqn.~\ref{afffromhig}) in hand, the proof technique of Thm.~\ref{mainth2}
can be duplicated to achieve the bound claimed in Thm.~\ref{mainth2} for distributed algorithms.

\subsection{Distributed implementation}
\label{sec:distr}

We now demonstrate how to implement \alg{General} in a distributed fashion.
Interestingly, we require very few additional assumptions to make this work. 
The basic tool is the 
algorithm in \cite{KV10}, listed as \alg{Distr-SingleLink}. 

\begin{algorithm}                      
\caption{Distr-SingleLink}          
\label{alg3}                           
\begin{algorithmic}[1]                    
     \STATE $k \leftarrow 0$
     \WHILE{transmission not successful}
       \STATE $q =  \frac{1}{4 \cdot 2^k}$
       \FOR{$\frac{8 \ln n}{q}$ slots}
          \STATE transmit with i.i.d. probability $q$
       \ENDFOR
       \STATE $k \leftarrow k + 1$
     \ENDWHILE
\end{algorithmic}
\label{alg13ig}
\end{algorithm}

The first thing to note here is that \alg{Distr-SingleLink} itself is completely distributed. 
In other words, \alg{General}, if applied to the links
produced during a single period, could be implemented in a distributed manner straightaway. 
The challenge is that our bounds assume that the algorithm works in a FIFO
manner, thus \alg{Distr-SingleLink} for a packet produced in $C_q$ should not start
executing until all packets produced up to $C_{q-1}$ have been successfully scheduled. 

To implement this, we assume that each sender in the system maintains a few counters. 
The first counter $cur$ keeps track of the current period, and a second counter $s$ tracks the current
period being \emph{scheduled} (naturally $s \leq cur$). 
For each outstanding packet $p$ in the queue, the sender also maintains the period
in which it was generated ($g_p$). 
The counter $cur$ is easily maintained, by incrementing it once every $\theta$ slots. 
The third counter is equally simple, when a packet $p$ arrives, $g_p$ is assigned
the current value of $cur$. We will describe how $s$ is maintained below, but note that
given $s$, the algorithm can now be easily implemented in a distributed fashion. For each
packet $p$, the sender waits until $s = g_p$, and then runs \alg{Distr-SingleLink} for $p$ until
the link successfully transmits.

Maintaining $s$ is slightly more, but not too, involved. 
It is here that we need to make an additional assumption, which is that nodes (senders) can
sense the channel to determine transmission activity. 
This is a not uncommon assumption (see, e.g., \cite{ScheidelerRS08}), based on the Clear Channel Assessment capability in the 802.11 wireless standard.

Let us divide the time slots into consecutive pairs.  The first time
slot is used for normal transmission (i.e., executing \alg{General}
and \alg{Distr-SingleLink}).  The second slot is used for
signaling. Senders that are transmitting currently (i.e., senders that
have at least one non-transmitted packet $p$ for which $g_p = s$) use
the signaling slot to simply signal that they have still not
completed.  Thus when all links from period $s$ succeed, a silent
signaling slot appears.  All senders register this event by sniffing
the channel, and increment $s$.

From a practical point of view, wasting every other slot for
signaling, as well as assuming some sort of ``perfect'' carrier sensing
capability, is problematic. 
Simulation studies presented are done
with practically reasonable approximations to these
assumptions (indeed, these heuristics seem to help the algorithm).

\section{Extensions and simulations}
\label{sec:pc}
In the power control version of the problem, selecting (an arbitrary) power for
each link is a part of the problem.
Feasible sets are now those for which there exists an (unknown) power
assignment that allows for simultaneous transmissions of all the links.

Using a centralized algorithm of \cite{KesselheimSoda11}, a $O(\frac1{\log n})$ throughput is achievable.
We omit details.

\begin{figure}
\begin{center}
\includegraphics[width=0.49\textwidth]{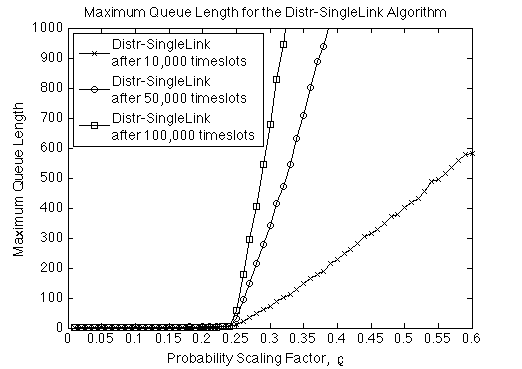}
\includegraphics[width=0.49\textwidth]{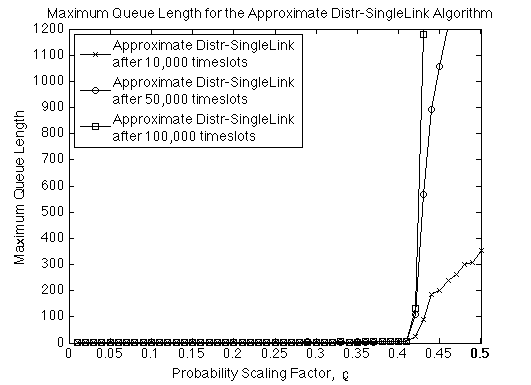}
\caption{The maximum queue lengths for the distributed algorithm Distr-SingleLink.   The problem instances are based on random topology with $n = 200$, $\ell_{\min} = 1$, $\ell_{\max} = 20$, $\alpha = 2.5$ and $\beta = 1$.} \label{fig:maxqueue}
\end{center}
\end{figure}

We implemented \alg{General} (using \alg{Distr-SingleLink}) on a random topology. As Fig.~\ref{fig:maxqueue} shows, efficiency ratios upto 0.4 is achieved, which is quite good. Details are omitted due to space restrictions.

\bibliographystyle{plain}
\bibliography{./references}		

\begin{thebibliography}{10}

\bibitem{DBLP:conf/infocom/AndrewsD09}
Matthew Andrews and Michael Dinitz.
\newblock Maximizing capacity in arbitrary wireless networks in the {SINR}
  model: {C}omplexity and game theory.
\newblock In {\em INFOCOM}, pages 1332--1340, 2009.

\bibitem{ciss12}
Eyj\'{o}lfur~I. \'{A}sgeirsson, Magn\'{u}s~M. Halld\'{o}rsson, and Pradipta
  Mitra.
\newblock {A Fully Distributed Algorithm for Throughput Performance in Wireless
  Networks}.
\newblock In {\em CISS}, 2012.

\bibitem{Asmussen}
Soren Asmussen.
\newblock {\em Applied Probability and Queues}.
\newblock Springer, 2nd edition, 2003.

\bibitem{secordorder}
Antonis Dimakis and Jean Walrand.
\newblock Sufficient conditions for stability of longest-queue-first
  scheduling: second-order properties using fluid limits.
\newblock {\em Advances in Applied Probabability}, 38(2):505--521, 2006.

\bibitem{dubhashi}
Devdatt~P. Dubhashi and Alessandro Panconesi.
\newblock {\em Concentration of Measure for the Analysis of Randomized
  Algorithms}.
\newblock Cambridge University Press, 2009.

\bibitem{GHWW09}
O.~Goussevskaia, M.~M. Halld\'{o}rsson, R.~Wattenhofer, and E.~Welzl.
\newblock {Capacity of Arbitrary Wireless Networks}.
\newblock In {\em INFOCOM}, pages 1872--1880, April 2009.

\bibitem{icalp11}
M.~M. Halld\'{o}rsson and P.~Mitra.
\newblock Nearly optimal bounds for distributed wireless scheduling in the
  {SINR} model.
\newblock In {\em ICALP}, 2011.

\bibitem{HW09}
M.~M. Halld\'{o}rsson and R.~Wattenhofer.
\newblock {Wireless Communication is in APX}.
\newblock In {\em ICALP}, pages 525--536, July 2009.

\bibitem{DBLP:conf/esa/Halldorsson09}
Magn{\'u}s~M. Halld{\'o}rsson.
\newblock Wireless scheduling with power control.
\newblock In {\em ESA}, pages 361--372, 2009.

\bibitem{SODA11}
Magn\'{u}s~M. Halld\'{o}rsson and Pradipta Mitra.
\newblock {Wireless Capacity with Oblivious Power in General Metrics}.
\newblock In {\em SODA}, 2011.

\bibitem{infocom12}
Magnus~M. Halldorsson and Pradipta Mitra.
\newblock Wireless capacity and admission control in cognitive radio.
\newblock In {\em INFOCOM}, 2012.

\bibitem{bestInfocom08}
Changhee Joo, Xiaojun Lin, and N.B. Shroff.
\newblock { Understanding the Capacity Region of the Greedy Maximal Scheduling
  Algorithm in Multi-Hop Wireless Networks}.
\newblock In {\em INFOCOM}, 2008.

\bibitem{KesselheimSoda11}
T.~Kesselheim.
\newblock {A Constant-Factor Approximation for Wireless Capacity Maximization
  with Power Control in the {S}{I}{N}{R} Model}.
\newblock In {\em SODA}, 2011.

\bibitem{KV10}
T.~Kesselheim and B.~V\"ocking.
\newblock Distributed contention resolution in wireless networks.
\newblock In {\em DISC}, pages 163--178, August 2010.

\bibitem{dynamickesselheim}
Thomas Kesselheim.
\newblock Dynamic packet scheduling in wireless networks.
\newblock \url{http://arxiv.org/abs/1203.1226}.

\bibitem{lqfmobihoc}
Long~B. Le, Eytan Modiano, Changhee Joo, and Ness~B. Shroff.
\newblock Longest-queue-first scheduling under {SINR} interference model.
\newblock In {\em MobiHoc}, 2010.

\bibitem{MaheshwariJD08}
Ritesh Maheshwari, Shweta Jain, and Samir~R. Das.
\newblock A measurement study of interference modeling and scheduling in
  low-power wireless networks.
\newblock In {\em SenSys}, pages 141--154, 2008.

\bibitem{DBLP:conf/sigmetrics/ModianoSZ06}
Eytan Modiano, Devavrat Shah, and Gil Zussman.
\newblock Maximizing throughput in wireless networks via gossiping.
\newblock In {\em SIGMETRICS/Performance}, pages 27--38, 2006.

\bibitem{Moscibroda2006Protocol}
Thomas Moscibroda, Roger Wattenhofer, and Yves Weber.
\newblock {Protocol Design Beyond Graph-Based Models}.
\newblock In {\em {Hotnets}}, November 2006.

\bibitem{moscibroda06b}
Thomas Moscibroda, Roger Wattenhofer, and Aaron Zollinger.
\newblock Topology control meets {SINR}: {T}he scheduling complexity of
  arbitrary topologies.
\newblock In {\em MobiHoc}, pages 310--321, 2006.

\bibitem{ScheidelerRS08}
Christian Scheideler, Andr{\'e}a~W. Richa, and Paolo Santi.
\newblock An {$O(\log n)$} dominating set protocol for wireless ad-hoc networks
  under the physical interference model.
\newblock In {\em MobiHoc}, pages 91--100, 2008.

\bibitem{TE92}
L.~Tassiulas and A.~Ephremides.
\newblock Stability properties of constrained queueing systems and scheduling
  policies for maximum throughput in multihop radio networks.
\newblock {\em IEEE Trans.~Automat.~Contr.}, 37(12):1936--1948, 1992.

\end{thebibliography}

\end{document}